\definecolor{mygray}{gray}{0.8}
\newtheorem{theorem}{Theorem}
\newtheorem{proposition}{Proposition}
\newtheorem{definition}{Definition}
\newtheorem{assumption}{Assumption}
\newtheorem{problem}{Problem}
\newtheorem{corollary}{Corollary}
\title{\LARGE \bf
Reachable Set Estimation and Safety Verification for Piecewise Linear Systems with Neural Network Controllers 
}
\author{Weiming Xiang, Hoang-Dung Tran, Joel A. Rosenfeld, Taylor T. Johnson
\thanks{The material presented in this paper is based upon work supported by
the National Science Foundation (NSF) under grant numbers CNS 1464311
and 1713253, and SHF 1527398 and 1736323, and the Air Force Office of
Scientific Research (AFOSR) under contract numbers FA9550-15-1-0258,
FA9550-16-1-0246, and FA9550-18-1-0122. The U.S. government is
authorized to reproduce and distribute reprints for Governmental
purposes notwithstanding any copyright notation thereon. Any opinions,
findings, and conclusions or recommendations expressed in this
publication are those of the authors and do not necessarily reflect
the views of AFOSR or NSF.}
\thanks{Authors are with the Department of Electrical Engineering and Computer Science, Vanderbilt University, Nashville, Tennessee 37212, USA.
        {\tt\small xiangwming@gmail.com; trhoangdung@gmail.com;
        	joel.rosenfeld@vanderbilt.edu; taylor.johnson@gmail.com}}
}
\begin{document}

\maketitle
\thispagestyle{empty}
\pagestyle{empty}

\begin{abstract}
\boldmath
In this work, the reachable set estimation and safety verification problems for a class of piecewise linear systems equipped with neural network controllers are addressed. The neural network is considered to consist of Rectified Linear Unit (ReLU) activation functions. A layer-by-layer approach is developed for the output reachable set computation of ReLU neural networks. The computation is formulated in the form of a set of manipulations for a union of polytopes. Based on the output reachable set for neural network controllers, the output reachable set for a piecewise linear feedback control system can be estimated iteratively for a given finite-time interval. With the estimated output reachable set, the safety verification for piecewise linear systems with neural network controllers can be performed by checking the existence of intersections of unsafe regions and output reach set. A numerical example is presented to illustrate the effectiveness of our approach. 

\end{abstract}

\section{Introduction}
Artificial neural networks have been widely used in machine learning systems, especially in control systems where the plant models are complex or even unavailable, e.g., \cite{hunt1992neural,ge1999adaptive}. 
However, such controllers are confined to systems which comply with the lowest safety integrity, since the majority of neural networks are viewed as \emph{black box} lacking effective methods to predict all outputs and assure safety specifications for closed-loop systems. In a variety of applications to feedback control systems, there are safety-oriented restrictions such that the system states are not allowed to reach unsafe regions while under the control of a neural network based feedback controller. Neural networks can react in unexpected and incorrect ways to even slight perturbations of their inputs \cite{szegedy2013intriguing}, thus it could result in unsafe closed-loop systems even while under control of  well-trained neural network controllers. Hence, methods that are able to provide formal guarantees are in a great demand for verifying specifications or properties of systems involving neural network controllers. Even the verification of simple properties concerning neural networks have been demonstrated to be NP-complete problems  \cite{katz2017reluplex}.  Few results have been reported in the literature for verifying systems involving neural networks. In \cite{huang2017safety} \emph{Satisfiability Modulo Theory} (SMT) is utilized for the verification of feed-forward multi-layer neural networks. In \cite{pulina2010abstraction} an abstraction-refinement approach is developed for computing output reachable set of neural networks. In \cite{Xiang2017reachable_arxiv,katz2017reluplex}, a specific kind of activation functions called \emph{Rectified Linear Unit} (ReLU) is considered for the verification of neural networks. A simulation-based approach is developed in \cite{Xiang2017output_arxiv}, which turns the reachable set estimation problem into a neural network maximal sensitivity computation problem that is described in terms of a chain of convex optimization problems. Recently, Lyapunov functions were utilized for reachable set estimation of neural networks in \cite{xu2017reachable,zuo2014non}. 

Piecewise linear systems have emerged as an important subclass of hybrid systems and represent a very active area of current research in the field of control systems \cite{Xiang2017stability,Decarlo2000perspectives,Liberzon2012switching,Xiang2016necessary}. 
The motivation for studying piecewise linear systems comes from the fact that piecewise linear systems can be effectively used to model many practical systems that are inherently multi-model in the sense that several dynamic subsystem models are required to describe their behaviors such as uncertain systems \cite{Xiang2018Parameter}. Stability analysis and stabilization are the main concerns for piecewise liner systems \cite{Zhang2016mode,Xiang2017robust,Lin2009stability}. Some recent results can be found for reachable set estimation for piecewise linear systems \cite{Xiang2017output,Chen2016estimation,xiang2017reachable}. In this paper, we will study the reachable set estimation and verification problems for a class of piecewise linear systems with neural network controllers. Since the neural network controller exists in the control loop, it is essential to compute or estimate the output reachable set of the neural network controller to facilitate the computation of the reachable set of the entire closed-loop system. For a class of ReLU neural networks, the output reachable set computation is converted into a set of polytope operations. Then, extensions to reachable set estimation for closed-loop systems are made and moreover, the safety verification is then reduced to check for empty intersections between reachable set and unsafe regions. 

The rest of this paper is organized as follows. The problem formulation and preliminaries are given in Section II. The main results of output reachable set estimations for ReLU neural networks and discrete-time piecewise linear feedback control systems with ReLU neural network controllers are presented in Section III. In  Section IV, a numerical example is provided to illustrate the results. Conclusions are given in Section V.

\section{Preliminaries and Problem Formulation}
In the paper, a class of discrete-time piecewise linear systems is considered in the following form
\begin{equation}\label{system}
x(k+1) = A_{\sigma(k)}x(k)+ B_{\sigma(k)}u(k)
\end{equation}
where $x(k) \in {\mathbb{R}^{n_x}}$  is state vector and $u(k) \in {\mathbb{R}^{n_u}}$ is the control input. The switching signal $\sigma$ is defined as $\sigma:\mathbb{N} \to \{1,\ldots,N\}$, where $N$ is the number of modes in the piecewise system. The switching instants are expressed by a sequence $\mathcal{S} \triangleq \left\{ {k_m} \right\}_{m=0}^{\infty}$, where $k_0$  denotes the initial time and $k_m$  denotes the $m$th switching instant.

Due to the presence of switching signal $\sigma$, piecewise linear system (\ref{system}) has much more complex behaviors than those are presented in linear systems. For the problem of controller design, our aim is to find  a feedback controller
\begin{equation}\label{controller}
u(k) = g(x(k))
\end{equation}
where $g: \mathbb{R}^{n_x} \to \mathbb{R}^{n_u}$ is a  static feedback controller. The corresponding closed-loop system becomes
\begin{equation}\label{closedloop}
x(k+1) = A_{\sigma(k)}x(k)+ B_{\sigma(k)}g(x(k)).
\end{equation}

It is noted that controller (\ref{controller}) includes the common linear feedback scheme $g(x(k)) = Kx(k)$ and the mode-dependent linear feedback controller $g(x(k)) = K_{\sigma(k)}x(k)$, which have be widely used in the literature. However, it still has a number of challenges for controller design problems of system (\ref{system}), especially when the switching signal is not available for the design process. For example, when the switching signal $\sigma$ is unavailable and a common feedback controller has to be designed, the resulting controller is usually designed to be overly conservative.

For general nonlinear systems which include system (\ref{system}), the neural network based design method is a promising approach to resolve controller design problems for complex systems. In this paper, we consider a class of feedforward neural networks  called the Multi-Layer Perceptron (MLP), which consists of a number of interconnected neurons. The action of a neuron depends on its activation function, which is described as 
\begin{align}
y_i = f\left(\sum\nolimits_{j=1}^{n}\omega_{ij} v_j + \theta_i\right)
\end{align}
where $v_j$ is the $j$th input of the $i$th neuron, $\omega_{ij}$ is the weight from the $j$th input to the $i$th neuron, $\theta_i$ is called the bias of the $i$th neuron, $y_i$ is the output of the $i$th neuron, $f(\cdot)$ is the activation function. The activation function is generally a nonlinear function  describing the reaction of $i$th neuron with inputs $v_j$, $j=1,\ldots,n$. Typical activation functions include rectified linear unit, logistic, tanh, exponential linear unit, linear functions. 

An MLP has multiple layers,  each layer $\ell$, $1 \le \ell \le L $, has $n^{[\ell]}$ neurons.  In particular, layer $\ell =0$ is used to denote the input layer and $n^{[0]}$ stands for the number of inputs for the neural network, and  $n^{[L]}$ is the number of neurons in the output layer. For a neuron $i$, $1 \le i \le n^{[\ell]}$ in layer $\ell$, the corresponding input vector is denoted by $v^{[\ell]}$ and the weight matrix is 
\begin{equation*}
W^{[\ell]} = \left[\omega_{1}^{[\ell]},\ldots,\omega_{n^{[\ell]}}^{[\ell]}\right]^{\top}
\end{equation*}
where $\omega_{i}^{[\ell]}$ is the weight vector. The bias vector for layer $\ell$ is
\begin{equation*} \theta^{[\ell]}=\left[\theta_1^{[\ell]},\ldots,\theta_{n^{[\ell]}}^{[\ell]}\right]^{\top}.
\end{equation*} 

The output vector of layer $\ell$ can be expressed as 
\begin{equation*}
y^{[\ell]}=f_{\ell}({W}^{[\ell]}v^{[\ell]}+\theta^{[\ell]})
\end{equation*} 
where $f_{\ell}(\cdot)$ is the activation function for layer $\ell$.

In this work, we consider a ReLU activation function expressed as:
\begin{equation} \label{relu}
f(v) = v^{+} = \max(0,v).
\end{equation}

The output of a neuron  is rewritten as
\begin{equation}  \label{reluLayer}
y_i = \max\left(0,\sum\nolimits_{j=1}^{n}\omega_{ij} v_j + \theta_i\right)
\end{equation}
and the corresponding output vector of layer $\ell$ becomes
\begin{equation} \label{layer_l}
y^{[\ell]}=\max(0,W^{[\ell]}v^{[\ell]}+{\theta}^{[\ell]}).
\end{equation}

For an MLP, the output of $\ell-1$ layer is the input of the $\ell$ layer, and the mapping from the input layer $v^{[0]}$ to the output of output layer $y^{[L]}$ is the input-output relation of the MLP, denoted by
\begin{equation}\label{NN}
y^{[L]} = g (v^{[0]})
\end{equation}    
where $g(\cdot) \triangleq f_L  \circ f_{L - 1}  \circ  \cdots  \circ f_1(\cdot) $. 

For controller design problem, we can let input of MLP $v^{[0]}=x(k)$ and output of MLP $y^{[L]} = u(k)$. That means, given a feedback control system of the form (2) and a control objective, a neural network can be trained to achieve the control objective. There are a variety of results for designing neural network based feedback controller $g(x(k))$.
Despite a neural network's ability to approximate nonlinear functions through the universality property, predicting the output behaviors of MLPs given in (\ref{NN})  still poses a significant challenge due to the nonlinearity and nonconvexity of MLPs.  An MLP is usually viewed as a \emph{black box} to generate  a desirable output with respect to a given input. However, when considering  property verification which includes safety verification, it has been observed that even a well-trained neural network can react in unexpected and incorrect manners to  slight perturbations of their inputs, which could result in unsafe systems.  Thus, the output reachable set computation or estimation of an MLP, which encompasses all possible values of outputs, is necessary to verify the safety property of a neural network based feedback control system. 

Given an initial set $\mathcal{X}_0$, the reachable set of system (\ref{closedloop}) defined at time $k$ and over an interval $[0,k]$ is given by the following definition.
\begin{definition} \label{reachable_set}
	Given a piecewise linear system (\ref{system}) with a neural network controller (\ref{controller}) and initial state $x(0)$ belonging to a set $\mathcal{X}_0$, the reachable set of closed-loop system (\ref{closedloop}) at time $k$ is defined as 
	\begin{equation}\label{def_1}
	\mathcal{X}_k \triangleq \{x(k)  \mid x(k)~\mathrm{satisfies}~ (\ref{closedloop})~\mathrm{and}~x(0) \in \mathcal{X}_0\}
	\end{equation}
	and the reachable set over time interval $[0,k]$ is defined by
	\begin{equation}\label{def_2}
	\mathcal{X}_{[0,k]} = \bigcup\nolimits_{h=0}^{k}\mathcal{X}_{h}
	\end{equation} 
	where $\mathcal{X}_h$, $h=0,1,\ldots,k$, are defined in (\ref{def_1})
\end{definition}

In this paper, two problems will be addressed for feedback system (\ref{closedloop}) equipped with a neural network controller of the form (\ref{controller}). 

\begin{problem} \label{problem1}
		How does one compute the sets $\mathcal{X}_k$ and $\mathcal{X}_{[0,k]}$ given a piecewise linear system of the form (\ref{system}) with a neural network controller (\ref{controller}) and initial state $x(0)$ belonging to set $\mathcal{X}_0$? 
\end{problem}

 The safety specification for a closed-loop system of the form (\ref{closedloop}) is expressed by a set defined in the state space, describing the safety requirement. 

\begin{definition}
	A safety specification $\mathcal{S}$ 
	formalizes the safety requirements for a closed-loop system of the form (\ref{closedloop}) and is a predicate over the system state $x$ of the closed-loop system. The  closed-loop system (\ref{closedloop}) is safe over the interval $[0,k]$ if and only if the following condition is satisfied:
	\begin{equation}\label{safety}
	\mathcal{X}_{[0,k]} \cap \neg \mathcal{S} = \emptyset
	\end{equation}
	where $\neg$ is the symbol for logical negation. 
\end{definition}

The safety verification problem for closed-loop system (\ref{closedloop}) is stated as follows.
\begin{problem}\label{problem2}
		How can the safety requirement in (\ref{safety}) be verified given a piecewise linear system of the form (\ref{system}) with a neural network controller (\ref{controller}),  initial state $x(0)$ belonging to a set $\mathcal{X}_0$ and a safety specification $\mathcal{S}$? 
\end{problem}

To facilitate the developments in this paper, the following assumption is made.
\begin{assumption}
	 Initial state set $\mathcal{X}_0$ is considered to be a union of $N_0$ 
	 polytopes, that is expressed as $\mathcal{X}_0 = \bigcup_{s =1}^{N_0}{\mathcal{X}_{s,0}}$, where $\mathcal{X}_{s,0}$, $s = 1,\ldots,N_0$, are described by
	 \begin{equation}
	 \mathcal{X}_{s,0} \triangleq\left\{ x \mid  H_{s,0}x \le b_{s,0},x \in \mathbb{R}^{n_x}\right\},~s =1,\ldots,N_0. \label{inputset}
	 \end{equation}
\end{assumption}

In the following section, the main results on reachable set estimation and verification for piecewise linear systems with neural network controllers will be presented.

\section{Main Results}
	In this section, a layer-by-layer method is developed for computing the output reachable set for a ReLU neural network. We consider a single layer with ReLU neurons  and an indicator vector $q = [q_0,\ldots,q_{n}]$, $q_i \in \{0,1\}$ is utilized, in which the element $q_i$ is valuated as below:
	\begin{equation}
	q_i  = \left\{ {\begin{array}{*{20}c}
		0 & { \sum\nolimits_{j=1}^{n}\omega_{ij}v_j+\theta_i \le 0}  \\
		1 & {\sum\nolimits_{j=1}^{n}\omega_{ij}v_j+\theta_i > 0}  \\	
		\end{array} }. \right.
	\end{equation}
	
	There are $2^n$ possible indicator vectors $q$ in total, which are indexed as $q_0 = [0,0,\ldots,0]$, $q_1 = [0,0,\ldots,1]$, $\ldots$, $q_{2^{n-1}} = [1,1,\ldots,1]$. In the sequel, all these vectors from $q_0$ to $q_{2^n-1}$ are diagonalized as $Q_0 = \mathrm{diag}(q_0)$, $Q_1 = \mathrm{diag}(q_1)$, $\ldots$,  $Q_{2^{n-1}}= \mathrm{diag}(q_{2^{n-1}})$. 	
	
\begin{theorem}\label{thm1}
	Given a single layer described by (\ref{layer_l}) and an input set  $\mathcal{V}^{[\ell]}$ for the layer, the output set is 
	\begin{equation}
\mathcal{Y}^{[\ell]} = \bar{\mathcal{Y}}^{[\ell]} \cup \hat{\mathcal{Y}}^{[\ell]} \cup \left(\bigcup\nolimits_{m=1}^{2^{n^{[\ell]}}-2}\mathcal{Y}_{m}^{[\ell]}\right)
	\end{equation}
	where $\bar{\mathcal{Y}}^{[\ell]}$, $\hat{\mathcal{Y}}^{[\ell]}$, $\mathcal{Y}_{m}^{[\ell]}$, are defined as below
	\begin{align*}
	\bar{\mathcal{Y}}^{[\ell]} &= \left\{ {\begin{array}{*{20}c}
		{\{ 0\}, } & {\bar{\mathcal{V}}^ {[\ell]}   \ne \emptyset }  \\
		\emptyset,  & {\bar{\mathcal{V}}^ {[\ell]} = \emptyset }  \\	
		\end{array} } \right.,
	\\
	\bar{\mathcal{V}}^{[\ell]}  &= \{v \mid W^{[\ell]}v + \theta^{[\ell]} \le 0,~v \in \mathcal{V}^{[\ell]}\};
	\\
    \hat{\mathcal{Y}}^{[\ell]} &= \{y = W^{[\ell]}v+\theta^{[\ell]}\mid v\in \hat{\mathcal{V}}^{[\ell]}\},
    \\    
    \hat{\mathcal{V}}^{[\ell]} & = \{v \mid W^{[\ell]}v + \theta^{[\ell]} > 0,v \in \mathcal{V}^{[\ell]}\};
    \\
    \mathcal{Y}_{m}^{[\ell]} &= \{ y = W^{[\ell]}v+\theta^{[\ell]}\mid v \in \bar{\mathcal{V}}_{m}^{[\ell]} \cap \hat{\mathcal{V}}_{m}^{[\ell]}\},
    \\\bar{\mathcal{V}}_{m}^{[\ell]} &= \{v \mid (I-Q_m)(W^{[\ell]}v+\theta^{[\ell]}) \le 0 ,v \in \mathcal{V}^{[\ell]}\},
    \\
    \hat{\mathcal{V}}_{m}^{[\ell]}&=\{v \mid Q_m(W^{[\ell]}v+\theta^{[\ell]}) \ge 0, v \in \mathcal{V}^{[\ell]}\}.
	\end{align*}
\end{theorem}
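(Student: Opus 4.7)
The plan is to partition $\mathcal{V}^{[\ell]}$ according to the activation pattern of the $n^{[\ell]}$ ReLU neurons and then take the union of the images over each cell. On each such cell the layer map (\ref{layer_l}) is affine, which makes the cell-wise image set tractable and a polytope under the assumption that $\mathcal{V}^{[\ell]}$ is a polytope.

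First, for any input $v$, I would define an activation pattern componentwise by $q_i(v) = 1$ when $(W^{[\ell]}v + \theta^{[\ell]})_i > 0$ and $q_i(v) = 0$ otherwise. There are exactly $2^{n^{[\ell]}}$ such patterns, enumerated $q_0,\ldots,q_{2^{n^{[\ell]}}-1}$ as in the paper. Each pattern $q_m$ determines a (possibly empty) polyhedral cell of $\mathcal{V}^{[\ell]}$: writing $Q_m = \mathrm{diag}(q_m)$, the cell is exactly $\bar{\mathcal{V}}_m^{[\ell]} \cap \hat{\mathcal{V}}_m^{[\ell]}$, since the constraint $Q_m(W^{[\ell]}v + \theta^{[\ell]}) \ge 0$ enforces non-negativity on the active coordinates while $(I - Q_m)(W^{[\ell]}v + \theta^{[\ell]}) \le 0$ enforces non-positivity on the inactive ones. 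These $2^{n^{[\ell]}}$ cells cover $\mathcal{V}^{[\ell]}$.

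Second, on the cell indexed by $Q_m$ the ReLU map (\ref{layer_l}) collapses to the affine map $y = Q_m(W^{[\ell]}v + \theta^{[\ell]})$, because the active coordinates pass through and the inactive ones are clipped to zero. I would single out the two extreme patterns. For $Q_0 = 0$ (all neurons inactive) the cell is $\bar{\mathcal{V}}^{[\ell]}$ and its image is the single point $\{0\}$ when this cell is nonempty and $\emptyset$ otherwise, matching $\bar{\mathcal{Y}}^{[\ell]}$. For $Q_{2^{n^{[\ell]}}-1} = I$ (all neurons active) the cell is $\hat{\mathcal{V}}^{[\ell]}$ and its image is the full affine image $\hat{\mathcal{Y}}^{[\ell]}$. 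For every intermediate $m \in \{1,\ldots,2^{n^{[\ell]}}-2\}$, the image of the cell is $\mathcal{Y}_m^{[\ell]}$ as stated.

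Taking the union of these $2^{n^{[\ell]}}$ images gives the claimed formula for $\mathcal{Y}^{[\ell]}$. The main subtlety I anticipate is the treatment of the strict/non-strict inequality at activation boundaries, i.e.\ when some coordinate of $W^{[\ell]}v + \theta^{[\ell]}$ equals zero and two adjacent cells overlap on that face. This is harmless for the image computation, however, because on such a face the corresponding coordinate is clipped to zero by both the active and the inactive branches, so it can be assigned to either neighbouring cell without changing the union of images. With this observation the union decomposition becomes exact, and the theorem follows.
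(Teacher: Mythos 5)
Your proof is correct and follows essentially the same route as the paper: an enumeration of the $2^{n^{[\ell]}}$ ReLU activation patterns, with the all-inactive and all-active patterns singled out as $\bar{\mathcal{Y}}^{[\ell]}$ and $\hat{\mathcal{Y}}^{[\ell]}$, and the remaining $2^{n^{[\ell]}}-2$ mixed patterns yielding the cells $\bar{\mathcal{V}}_{m}^{[\ell]} \cap \hat{\mathcal{V}}_{m}^{[\ell]}$ on which the layer acts affinely. Your explicit remark that adjacent cells overlapping on an activation boundary is harmless for the union of images is a small point of added care that the paper's proof does not discuss, but it does not change the argument.
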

\begin{proof}
	For the inputs of the layer as $v^{[\ell]} \in \mathcal{V}^{[\ell]}$, we have three cases listed below to completely characterize the outputs of layer (\ref{layer_l}).
	
    \emph{Case 1}:   All the elements in the outputs are non-positive, which means
		\begin{equation}
		v^{[\ell]} \in \bar{\mathcal{V}}^{[\ell]}  = \{v \mid W^{[\ell]}v + \theta^{[\ell]} \le 0,~v \in \mathcal{V}^{[\ell]}\}.
		\end{equation}
        
By the definition of ReLU, it directly yields the output set for this case is 
		\begin{equation}
		\bar{\mathcal{Y}}^{[\ell]}= \left\{ {\begin{array}{*{20}c}
			{\{ 0\}, } & {\bar{\mathcal{V}}^ {[\ell]}   \ne \emptyset }  \\
			\emptyset,  & {\bar{\mathcal{V}}^ {[\ell]} = \emptyset }  \\	
			\end{array} } \right..
		\end{equation} 
		
		\emph{Case 2}: All the elements are   positive by the input in $\mathcal{V}^{[\ell]}$, that implies
		\begin{equation}
		v^{[\ell]}  \in \hat{\mathcal{V}}^{[\ell]} = \{v \mid W^{[\ell]}v + \theta^{[\ell]} > 0,v \in \mathcal{V}^{[\ell]}\}.
		\end{equation}
	
    So, the output set is 
		\begin{equation}
		\hat{\mathcal{Y}}^{[\ell]} = \{y \mid y = W^{[\ell]}v+\theta^{[\ell]},v\in \hat{\mathcal{V}}^{[\ell]}\}.
		\end{equation}
		
		\emph{Case 3}: Outputs have both negative and positive elements, which correspond to indicator vectors $q_m$, $m=1,\ldots,2^{n^{[\ell]}}-2$.  Note that, for each $q_m$, $m=1,\ldots,2^{n^{[\ell]}}-2$, the element $q_i = 0$ indicates $y_i=\max(0,\sum\nolimits_{j=1}^{n^{[\ell]}}\omega_{ij}v_j+\theta_i)=  0$ due to $\sum\nolimits_{j=1}^{n^{[\ell]}}\omega_{ij}v_j+\theta_i \le 0$. With respect to each $q_m$, $m=1,\ldots,2^{n^{[\ell]}}-2$, we define  set 
		\begin{equation}
		\bar{\mathcal{V}}_{m}^{[\ell]} = \{v\mid \sum\nolimits_{j=1}^{n^{[\ell]}}\omega_{ij}^{[\ell]}v_j+\theta_i^{[\ell]} \le 0,v \in \mathcal{V}^{[\ell]}\}
		\end{equation}
		in which $i \in \{i \mid q_i =0~\mathrm{in}~q_m =[q_1,\ldots,q_{n^{[\ell]}]}\}$. In a compact form, it can be expressed as
		\begin{equation}
		\bar{\mathcal{V}}_{m}^{[\ell]} = \{v \mid (I-Q_m)(W^{[\ell]}v+\theta^{[\ell]}) \le 0 ,v \in \mathcal{V}^{[\ell]}\}.
		\end{equation}
		
		Due to  ReLU functions, when $\sum\nolimits_{j=1}^{n^{[\ell]}}\omega_{ij}v_j+\theta_i \le 0$, it will be set to 0, thus the output for $v \in \bar{\mathcal{V}}_{m}$ should be
		\begin{equation}
		\bar{\mathcal{Y}}_{m}^{[\ell]} = \{ y=Q_m(W^{[\ell]}v+\theta^{[\ell]}) \mid v \in \bar{\mathcal{V}}_{m}^{[\ell]}\}.
		\end{equation}
        
		Again, due to ReLU functions, the final value should be non-negative, that is $y \ge 0$, thus additional  constraint for $v^{[\ell]}$ has to be added as 
		\begin{equation}
		v^{[\ell]} \in \hat{\mathcal{V}}_{m}^{[\ell]}=\{v \mid Q_m(W^{[\ell]}v+\theta^{[\ell]}) \ge 0, v \in \bar{\mathcal{V}}_{m}^{[\ell]}\}.
		\end{equation}
        
 The resulting output set is \begin{equation}\mathcal{Y}_{m}^{[\ell]} = \{ y = W^{[\ell]}v+\theta^{[\ell]} \mid v \in \bar{\mathcal{V}}_{m}^{[\ell]} \cap \hat{\mathcal{V}}_{m}^{[\ell]}\}.
		\end{equation}
	
	The three cases establish that the output set generated from input set $\mathcal{V}^{[\ell]}$ is 
	\begin{equation}\label{ysl}
	\mathcal{Y}^{[\ell]} = \bar{\mathcal{Y}}^{[\ell]} \cup \hat{\mathcal{Y}}^{[\ell]} \cup \left(\bigcup\nolimits_{m=1}^{2^{n^{[\ell]}}-2}\mathcal{Y}_{m}^{[\ell]}\right).
	\end{equation}
The proof is complete.
\end{proof}

From Theorem \ref{thm1}, the following corollary can be obtained if the input set for the layer is a union of polytopes. 
\begin{corollary}\label{cor1}
		Consider a signal layer (\ref{layer_l}), if the input set $\mathcal{V}^{[\ell]}$ is $ \mathcal{V}^{[\ell]} =  \bigcup_{s=1}^{N_{\ell}}\mathcal{V}_s^{[\ell]}$, where $\mathcal{V}^{[\ell]}_s$, $s \in \{1,\ldots,N_{\ell}\}$, are polytopes described as 
		\begin{equation}\label{cor_1}
		\mathcal{V}^{[\ell]}_s\triangleq \{v \mid H_s^{[\ell]}v\le b_s^{[\ell]}, v \in \mathbb{R}^{n^{[\ell]}}\}
		\end{equation}
		 then the output set $\mathcal{Y}^{[\ell]}$ is also a union of polytopes.
\end{corollary}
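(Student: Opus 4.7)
The plan is to reduce the union-of-polytopes case to the single-polytope case and then verify, piece by piece, that each component produced by Theorem~\ref{thm1} is itself a polytope.

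First, I would observe that the constructions appearing in Theorem~\ref{thm1}, namely intersection with a halfspace, intersection with the input set, and affine image under $v \mapsto W^{[\ell]}v + \theta^{[\ell]}$, all distribute over finite unions. Consequently, applying Theorem~\ref{thm1} to $\mathcal{V}^{[\ell]} = \bigcup_{s=1}^{N_{\ell}} \mathcal{V}_s^{[\ell]}$ is equivalent to applying it to each $\mathcal{V}_s^{[\ell]}$ separately and then taking the union over $s$. Thus it suffices to show that for a single polytope input $\mathcal{V}_s^{[\ell]}$ of the form (\ref{cor_1}), every set listed in Theorem~\ref{thm1} is a polytope; the overall output is then a finite union of at most $N_{\ell}\cdot 2^{n^{[\ell]}}$ polytopes.

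Next I would handle the three cases one at a time. For $\bar{\mathcal{Y}}^{[\ell]}$, the set is either empty or the singleton $\{0\}$, both of which are (degenerate) polytopes. For $\hat{\mathcal{Y}}^{[\ell]}$, the preimage $\hat{\mathcal{V}}_s^{[\ell]}$ is obtained from $\mathcal{V}_s^{[\ell]}$ by appending the linear constraints $-W^{[\ell]}v \le \theta^{[\ell]}$ (treating the strict inequality in its closure, as is standard when representing reachable sets by polytopes), yielding a polytope given by
\begin{equation*}
\begin{bmatrix} H_s^{[\ell]} \\ -W^{[\ell]} \end{bmatrix} v \le \begin{bmatrix} b_s^{[\ell]} \\ \theta^{[\ell]} \end{bmatrix},
\end{equation*}
whose image under the affine map $v \mapsto W^{[\ell]}v + \theta^{[\ell]}$ is a polytope (the affine image of a polytope is again a polytope). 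The case $\mathcal{Y}_m^{[\ell]}$ is analogous: the preimage $\bar{\mathcal{V}}_m^{[\ell]} \cap \hat{\mathcal{V}}_m^{[\ell]}$ is the intersection of $\mathcal{V}_s^{[\ell]}$ with the additional halfspaces given by $(I-Q_m)(W^{[\ell]}v+\theta^{[\ell]}) \le 0$ and $-Q_m(W^{[\ell]}v+\theta^{[\ell]}) \le 0$, still a polytope, and again one takes an affine image.

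The only technical nuisance I anticipate is the appearance of strict inequalities in $\hat{\mathcal{V}}^{[\ell]}$ and in the definition of the indicator vector, which strictly speaking produce relatively open sets rather than closed polytopes. I would resolve this by relaxing to non-strict inequalities, which at most adds the $y=0$ contribution already captured by $\bar{\mathcal{Y}}^{[\ell]}$ and so does not change the union; this is consistent with the conventions implicit in the paper's polytope representation (\ref{inputset}) and (\ref{cor_1}). With that convention fixed, the combination of the distribution argument and the case-by-case polyhedral closure yields the claimed union-of-polytopes form of $\mathcal{Y}^{[\ell]}$.
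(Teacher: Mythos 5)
Your proposal is correct and follows essentially the same route as the paper: decompose the input union into its individual polytopes, apply Theorem \ref{thm1} to each, and observe that each of $\bar{\mathcal{Y}}^{[\ell]}$, $\hat{\mathcal{Y}}^{[\ell]}$, $\mathcal{Y}_m^{[\ell]}$ is a polytope (an affine image of the input polytope intersected with additional halfspaces), so the overall output is a finite union of polytopes. Your explicit $H$-representations and your remark about closing the strict inequalities supply details that the paper's one-line proof leaves implicit but records separately in (\ref{barY})--(\ref{Hbsm}) immediately after the corollary.
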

\begin{proof}
	By Theorem \ref{thm1},  $\bar{\mathcal{Y}}^{[\ell]}$, $\hat{\mathcal{Y}}^{[\ell]}$, $\mathcal{Y}_{m}^{[\ell]}$, are polytopes if $\mathcal{V}^{[\ell]}$ is a polytope. Thus, for $\mathcal{V}_s^{[\ell]}$ in (\ref{cor_1}), the corresponding output set, $\mathcal{Y}^{[\ell]}_s$, is a union of polytopes. Moreover, for $\mathcal{V}^{[\ell]} =  \bigcup_{s=1}^{N_{\ell}}\mathcal{V}_s^{[\ell]}$, the output set is
	\begin{equation}
	\mathcal{Y}^{[\ell]} =\bigcup\nolimits_{s=1}^{N_\ell}\mathcal{Y}^{[\ell]}_s
	\end{equation}
	which is  a union of polytopes.  
\end{proof}

By Corollary \ref{cor1}, if input set $\mathcal{V}^{[\ell]}$ is given as (\ref{cor_1}), the output can be expressed by a union of polytopes  by Theorem \ref{thm1}. The set $\bar{\mathcal{Y}}^{[\ell]}_s$, $\hat{\mathcal{Y}}^{[\ell]}_s$, $\mathcal{Y}_{s,m}^{[\ell]}$ can be expressed as follows:

\begin{align}
\bar{\mathcal{Y}}_s^{[\ell]}  & = \left\{ {\begin{array}{*{20}c}
	{\{ 0\}, } & {\bar{\mathcal{V}}_s^ {[\ell]}   \ne \emptyset }  \\
	\emptyset,  & {\bar{\mathcal{V}}_s^ {[\ell]}  = \emptyset }  \\	
	\end{array} } \right., \label{barY}
\\
\bar{\mathcal{V}}_s^{[\ell]}& = \left\{v \mid \left[ {\begin{array}{*{20}c}
	{H_s^{[\ell]} }  \\
	W^{[\ell] }  \\
	\end{array} } \right]v \le \left[ {\begin{array}{*{20}c}
	{b_s^{[\ell]} }  \\
	\theta^{[\ell]}  \\
	\end{array} } \right]\right\}; \label{barV}
\\
\hat{\mathcal{Y}}_s^{[\ell]}& = \left\{ y =W^{[\ell]}v+\theta^{[\ell]} \mid v \in \hat{\mathcal{V}}_s^{[\ell]}\right\}, \label{hatY}
\\
\hat{\mathcal{V}}_s^{[\ell]} &= \left\{v\mid H_s^{[\ell]} v\le b_s^{[\ell]} \wedge W^{[\ell]}v> -\theta^{[\ell]} \right\}; \label{hatV}
\\
\mathcal{Y}_{s,m}^{[\ell]}& = \left\{y =W^{[\ell]}v+\theta^{[\ell]} \mid v \in \mathcal{V}_{m,s}^{[\ell]}\right\}, \label{Ysm}
\\
\mathcal{V}_{m,s}^{[\ell]} &=\left\{v\mid H_{m,s}^{[\ell]}v \le b_{m,s}^{[\ell]}\right\}, \label{Vsm}
\\
H_{m,s}^{[\ell]}&=\left[ {\begin{array}{*{20}c}
	{H_s^{[\ell]} }  \\
	{(I - Q_m)W^{[\ell]}}  \\
	{ - Q_m W^{[\ell]}}  \\
	\end{array} } \right],b_{m,s}^{[\ell]} = \left[ {\begin{array}{*{20}c}
	{b_s^{[\ell]} }  \\
	( Q_m-I)\theta^{[\ell]}  \\
	Q_m\theta^{[\ell]}  \\	
	\end{array} } \right].\label{Hbsm}
\end{align}
where $\wedge$ is the symbol of logical conjunction. 

The  algorithm for generating the output set of layer $\ell$ is summarized in Algorithm \ref{algorithm_1}.

\begin{algorithm}
	\caption{Output Reach Set Computation for ReLU Layers} \label{algorithm_1}
	
	\begin{algorithmic}[1]
		\Require Neural network weight matrix $W^{[\ell]}$ and bias $~\theta^{[\ell]}$, input set $\mathcal{V}^{[\ell]}= \bigcup_{s=1}^{N_{\ell}}\mathcal{V}_s^{[\ell]}$ with $\mathcal{V}^{[\ell]}_s \triangleq\{v \mid  H_s^{[\ell]}v\le b_s^{[\ell]}\} $.
		\Ensure Output reach set $\mathcal{Y}^{[\ell]}$.
		
		\Function{layeroutput}{$W^{[\ell]},~\theta^{[\ell]}$,~$\mathcal{V}^{[\ell]}$}
		\For{$s=1:1:N_{\ell}$}
		\State Compute $\bar{\mathcal{V}}_s^{[\ell]}$ by (\ref{barV})
				
		\If{$\bar{\mathcal{V}}_s^{[\ell]} \ne \emptyset$}
		\State $\bar{\mathcal{Y}}_s^{[\ell]} \gets \{0\} $
		\Else
		\State $\bar{\mathcal{Y}}_s^{[\ell]} \gets \emptyset$
		\EndIf
		
		\State Compute $\hat{\mathcal{Y}}_s^{[\ell]}$ by (\ref{hatY}), (\ref{hatV})
		
		\For{$h=1:1:2^n-2$}
		\State Compute $\mathcal{Y}_{s,m}$ by (\ref{Ysm})--(\ref{Hbsm})
		
		\EndFor
		\State $\mathcal{Y}^{[\ell]}_s \gets \bar{\mathcal{Y}}^{[\ell]}_s \cup \hat{\mathcal{Y}}^{[\ell]}_s \cup\left( \bigcup\nolimits_{m=1}^{2^{n^{[\ell]}}-2}\mathcal{Y}_{s,m}^{[\ell]}\right)$
		\EndFor	
		
		\State \Return $\mathcal{Y}^{[\ell]} \gets \bigcup\nolimits_{s=1}^{N_{\ell}}\mathcal{Y}^{[\ell]}_s$
		
		\EndFunction
	\end{algorithmic}
\end{algorithm}

As for linear activation functions, which are commonly used in the output layer $L$, the output reach set can be computed in a similar manner to the set $\hat{\mathcal{Y}}_s^{[\ell]}$ for ReLU, this time omitting the constraint $y > 0$. 

\begin{corollary}
	Consider a linear layer $y^{[\ell]}=W^{[\ell]}v^{[\ell]}+\theta^{[\ell]}$ with input set $\mathcal{V}^{[\ell]}$, the output reach set $\mathcal{Y}^{[\ell]}$ of linear layer $\ell$ is
	\begin{equation}\label{linearY}
\mathcal{Y}^{[\ell]} = \left\{y =W^{[\ell]}v+\theta^{[\ell]} \mid v \in \mathcal{V}^{[\ell]}\right\}.
	\end{equation}
Moreover, if the input set $\mathcal{V}^{[\ell]}$ is a union of polytopes, the output set $\mathcal{Y}^{[\ell]}$ is still a union of polytopes.
\end{corollary}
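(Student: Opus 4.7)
The plan is to dispatch the two claims in turn. The first equation (\ref{linearY}) is essentially unpacking the definition: the output of the linear layer is $y^{[\ell]} = W^{[\ell]} v^{[\ell]} + \theta^{[\ell]}$, so as $v^{[\ell]}$ ranges over the input set $\mathcal{V}^{[\ell]}$, the output reach set is exactly the image of $\mathcal{V}^{[\ell]}$ under this affine map. Unlike the ReLU case treated in Theorem \ref{thm1}, there is no sign-based case split and no clipping at zero, so the three-case decomposition collapses into a single case and no additional constraint on $y$ is needed.

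For the second claim, I would first exploit that the image operation commutes with unions. Writing $\mathcal{V}^{[\ell]} = \bigcup_{s=1}^{N_\ell} \mathcal{V}_s^{[\ell]}$, one obtains
\begin{equation*}
\mathcal{Y}^{[\ell]} = \bigcup\nolimits_{s=1}^{N_\ell} \mathcal{Y}_s^{[\ell]}, \qquad \mathcal{Y}_s^{[\ell]} = \{W^{[\ell]} v + \theta^{[\ell]} \mid v \in \mathcal{V}_s^{[\ell]}\},
\end{equation*}
so the problem reduces to showing that the image of a single polytope $\mathcal{V}_s^{[\ell]} = \{v \mid H_s^{[\ell]} v \le b_s^{[\ell]}\}$ under an affine map is a polytope. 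This is a standard convex-geometric fact: $\mathcal{Y}_s^{[\ell]}$ coincides with the projection of the polytope
\begin{equation*}
\mathcal{P}_s = \{(v,y) \mid H_s^{[\ell]} v \le b_s^{[\ell]},\; y = W^{[\ell]} v + \theta^{[\ell]}\}
\end{equation*}
onto the $y$-coordinates, and the projection of a polytope onto a coordinate subspace is again a polytope (by Fourier--Motzkin elimination of the $v$-variables). Moreover, this form coincides with the implicit representation already employed in (\ref{hatY})--(\ref{hatV}) for the ReLU case, so the result is consistent with the conventions of Theorem \ref{thm1} and Corollary \ref{cor1}.

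I do not expect any genuine obstacle. The argument is a two-line specialization of the proof of Theorem \ref{thm1} restricted to the first half of \emph{Case 2} (the unconstrained image), and the polytope-preservation claim follows from standard properties of affine images. The only point worth stating carefully is that because the linear activation does not enforce $y \ge 0$, the extra constraint $W^{[\ell]} v > -\theta^{[\ell]}$ appearing in (\ref{hatV}) is dropped, yielding the clean form (\ref{linearY}).
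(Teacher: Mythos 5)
Your proof is correct and follows essentially the same route as the paper: the first claim is read off from the definition of the linear layer, and the second follows by distributing the affine image over the union and noting that the affine image of a polytope is a polytope. The paper's own proof is just a two-sentence appeal to ``the similar proof line in Corollary \ref{cor1}''; your explicit justification via projection of $\mathcal{P}_s$ and Fourier--Motzkin elimination supplies a detail the paper leaves implicit, but it is the same argument.
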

\begin{proof}
  Since linear layer has $y^{[\ell]}=W^{[\ell]}v^{[\ell]}+\theta^{[\ell]}$, the output set $\mathcal{Y}^{[\ell]}$ is expressed by (\ref{linearY}). Furthermore, by the similar proof line in Corollary \ref{cor1}, $\mathcal{Y}^{[\ell]}$ is a union of ploytopes if input set $\mathcal{V}^{[\ell]}$ is a union of polytopes.  
\end{proof}

 For a multi-layer neural network, it can be observed that $v^{[\ell]}=y^{[\ell-1]}$, $\ell=1,\ldots,L$, as well as $\mathcal{V}^{[\ell]} = \mathcal{Y}^{[\ell-1]}$,  $\ell=1,\ldots,L$. Hence, the neural network (\ref{NN}) can be expressed recursively as
 \begin{equation} \label{layer_l_iteration}
 y^{[\ell]}=\max(0,W^{[\ell]}y^{[\ell-1]}+\theta^{[\ell]}),~\ell =1,\ldots,L 
 \end{equation} 
 where $y^{[0]} = v^{[0]}$ is the input of the neural network and $y^{[L]}$ is the output of the neural network, respectively. Accordingly, the input set and output are denoted by $\mathcal{V}^{[0]}$ and $\mathcal{Y}^{[L]}$. 

\begin{proposition}\label{proposition_1}
Given an MLP with $L$ layers with corresponding ReLU or linear activation functions for each layer $\ell$, $\ell =1,\ldots, L$, and given the input set $\mathcal{V}^{[0]}$  as described by (\ref{cor_1}) , the output reach set $\mathcal{Y}^{[L]}$ can be computed recursively using (\ref{barY})--(\ref{linearY})  and Algorithm \ref{algorithm_1}.
\end{proposition}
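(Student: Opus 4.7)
The plan is to prove Proposition \ref{proposition_1} by straightforward induction on the layer index $\ell$, using the fact (established in Corollary \ref{cor1} and its linear analogue) that each individual layer preserves the ``union of polytopes'' structure, together with the recursive identity $\mathcal{V}^{[\ell]} = \mathcal{Y}^{[\ell-1]}$.

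First I would state the induction hypothesis precisely: for each $\ell \in \{0, 1, \ldots, L\}$, the set $\mathcal{Y}^{[\ell]}$ (with the convention $\mathcal{Y}^{[0]} = \mathcal{V}^{[0]}$) is a finite union of polytopes, and it can be computed from $\mathcal{Y}^{[\ell-1]}$ by invoking Algorithm \ref{algorithm_1} (in the ReLU case) or the linear formula (\ref{linearY}) (in the linear case). The base case $\ell = 0$ is immediate, since $\mathcal{V}^{[0]}$ is a union of polytopes by assumption (\ref{cor_1}).

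For the inductive step, suppose the claim holds for $\ell - 1$, so that $\mathcal{Y}^{[\ell-1]} = \bigcup_{s=1}^{N_{\ell-1}} \mathcal{Y}^{[\ell-1]}_s$ with each $\mathcal{Y}^{[\ell-1]}_s$ a polytope. Using the recursive relation (\ref{layer_l_iteration}) together with $\mathcal{V}^{[\ell]} = \mathcal{Y}^{[\ell-1]}$, I would identify each polytope $\mathcal{Y}^{[\ell-1]}_s$ with an input polytope $\mathcal{V}^{[\ell]}_s$ of layer $\ell$. If layer $\ell$ uses ReLU activation, Theorem \ref{thm1} applied to each $\mathcal{V}^{[\ell]}_s$ produces $\mathcal{Y}^{[\ell]}_s = \bar{\mathcal{Y}}_s^{[\ell]} \cup \hat{\mathcal{Y}}_s^{[\ell]} \cup \bigl(\bigcup_{m=1}^{2^{n^{[\ell]}}-2}\mathcal{Y}_{s,m}^{[\ell]}\bigr)$ via the explicit constructions (\ref{barY})--(\ref{Hbsm}), and Corollary \ref{cor1} guarantees that this is again a union of polytopes. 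If layer $\ell$ is linear, the linear corollary gives the same conclusion via (\ref{linearY}). Taking the union over $s$ and noting this is exactly the output returned by Algorithm \ref{algorithm_1} completes the step.

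Applying the induction $L$ times yields $\mathcal{Y}^{[L]}$ as the required union of polytopes, computed recursively layer by layer. There is no real obstacle here: the substance of the work has already been done in Theorem \ref{thm1} and Corollary \ref{cor1}; the only thing to verify carefully is that the output set of layer $\ell-1$ is indeed admissible as the input set of layer $\ell$ in the form required by (\ref{cor_1}), which follows immediately from the fact that all of $\bar{\mathcal{Y}}_s^{[\ell]}$, $\hat{\mathcal{Y}}_s^{[\ell]}$, and $\mathcal{Y}_{s,m}^{[\ell]}$ are polytopes described by explicit half-space inequalities, so the recursion is well-defined and terminates after $L$ iterations.
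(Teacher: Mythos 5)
Your proof is correct and takes essentially the same route as the paper: the paper's own proof is a one-line remark that the result follows directly from the identity $\mathcal{V}^{[\ell]} = \mathcal{Y}^{[\ell-1]}$ together with Theorem \ref{thm1} and Corollary \ref{cor1}, and your induction on $\ell$ simply makes explicit the layer-by-layer argument the paper leaves implicit. No gap; your version is just a more careful write-up of the same idea.
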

\begin{proof}
	It can be derived  directly using $\mathcal{V}^{[\ell]} = \mathcal{Y}^{[\ell-1]}$. The proof is complete. 
\end{proof}

The routine for computing the output reach set produced by ReLU neural networks is outlined in Algorithm  \ref{algorithm_2}.
\begin{algorithm}
	\caption{Output Reach Set Computation for ReLU Networks} \label{algorithm_2}
	
	\begin{algorithmic}[1]
		\Require Neural network weight matrix $W^{[\ell]}$ and bias $\theta^{[\ell]}$, $\ell=1,\ldots,L$, input set $\mathcal{V}^{[0]}= \bigcup_{s=1}^{N_{0}}\mathcal{V}_s^{[0]}$ with $\mathcal{V}^{[0]}_s \triangleq\{v \mid  H_s^{[0]}v \le b_s^{[0]}\} $.
		\Ensure Output reach set $\mathcal{Y}^{[L]}$.
		
		\Function{networkoutput}{$W^{[\ell]},~\theta^{[\ell]}$,~$\mathcal{V}^{[0]}$}
		\State $\mathcal{V}^{[1]} \gets \mathcal{V}^{[0]}$
		\For{$\ell=1:1:L$}
		\If{Layer $\ell$ is a ReLU Layer}

					\State $\mathcal{Y}^{[\ell]} \gets\mathrm{layerouput}(W^{[\ell]},~\theta^{[\ell]},\mathcal{V}^{[\ell]})$
	
		\ElsIf{Layer $\ell$ is a linear layer}
		
			\State Compute $\mathcal{Y}^{[\ell]}$ by (\ref{linearY})

		\EndIf
		
		\If{$\ell <L$}
		
		\State $\mathcal{V}^{[\ell+1]} \gets \mathcal{Y}^{[\ell]}$
		
		\ElsIf{$\ell =L$}
			\State \Return $\mathcal{Y}^{[L]}$
		\EndIf
		
		\EndFor
	
		\EndFunction
	\end{algorithmic}
\end{algorithm}

\begin{algorithm}
	\caption{Reachable Set Computation for piecewise linear systems with neural network controller} \label{algorithm_3}
	
	\begin{algorithmic}[1]
		\Require Neural network weight matrix $W^{[\ell]}$ and bias $\theta^{[\ell]}$, $\ell=1,\ldots,L$, system matrices $A_i$, $B_i$, $i=1,\ldots,N$, initial state set $\mathcal{X}_{0}= \bigcup_{s=1}^{N_{0}}\mathcal{X}_{s,0}$ with $ \mathcal{X}_{s,0} \triangleq\left\{ x \mid  H_{s,0}x \le b_{s,0},~x \in \mathbb{R}^{n}\right\}$.
		\Ensure Reachable set $\mathcal{X}_{k}$.
		
		\Function{systemreach}{$W^{[\ell]},~\theta^{[\ell]}$,~ $A_i$,~$B_i$, ~$\mathcal{X}_{0}$}
		\For{$h = 0:1:k-1$}
		\State $\mathcal{G}_h  \gets \mathrm{networkoutput}(W^{[\ell]},\theta^{[\ell]},\mathcal{X}_{h} )$
		\State Compute $\mathcal{X}_{h+1}$ by (\ref{Xh+1})
		\State $\mathcal{X}_{[0,h+1]} \gets \mathcal{X}_{h+1} \cup \mathcal{X}_{[0,h]}$
		\EndFor
		
		\State \Return$\mathcal{X}_{k}$ and $\mathcal{X}_{[0,k]}$
		
		\EndFunction
	\end{algorithmic}
\end{algorithm}

For piecewise linear systems as in  (\ref{system}), the reachable set $\mathcal{X}_k$ can be computed by the following proposition. 

\begin{proposition}\label{proposition_2}
	Consider piecewise linear system (\ref{system}) with initial state set $\mathcal{X}_0$ in (\ref{inputset}),  the reachable set $\mathcal{X}_k$ can be iteratively computed as 
	\begin{align}
	\mathcal{X}_{k+1}  = \{ x\mid x=A_{\sigma (k )}& x(k) + B_{\sigma (k )} g(x(k )), \nonumber
	\\
	~&~g(x(k)) \in \mathcal{G}_k,~x(k) \in \mathcal{X}_{k}\} \label{Xh+1}
	\end{align} 
	where $	\mathcal{G}_k  = \mathrm{networkoutput}(W^{[\ell]},\theta^{[\ell]},\mathcal{X}_{k} )$ is the output of Algorithm \ref{algorithm_2} and, the reachable set over time interval $[0,k]$ is
	\begin{equation}\label{prop_2}
\mathcal{X}_{[0,k]}=\bigcup\nolimits_{h=0}^{k}\mathcal{X}_h.
	\end{equation}
\end{proposition}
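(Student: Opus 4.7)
The plan is to prove Proposition \ref{proposition_2} by induction on the time index $k$, using Definition \ref{reachable_set} together with Proposition \ref{proposition_1} as the main engine. Proposition \ref{proposition_1} already guarantees that for any polyhedral union input set $\mathcal{V}^{[0]}$, the algorithm \textsc{networkoutput} returns exactly the image of the MLP map $g(\cdot)$ on $\mathcal{V}^{[0]}$. Consequently, once $\mathcal{X}_k$ is known to be a union of polytopes, $\mathcal{G}_k = \{g(x) \mid x \in \mathcal{X}_k\}$ holds by construction, and the rest reduces to showing that the affine push-forward through $A_{\sigma(k)}x + B_{\sigma(k)}u$ captures exactly the one-step reachable set.

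For the base case I take $k=0$, where Assumption 1 states that $\mathcal{X}_0$ is a union of polytopes of the prescribed form, which matches (\ref{inputset}). For the inductive step, assume $\mathcal{X}_k$ has been computed and coincides with $\{x(k) \mid x(k) \text{ satisfies (\ref{closedloop}) and } x(0) \in \mathcal{X}_0\}$. Applying \textsc{networkoutput} with $\mathcal{V}^{[0]} = \mathcal{X}_k$ yields $\mathcal{G}_k$; by Proposition \ref{proposition_1}, every $u \in \mathcal{G}_k$ equals $g(x)$ for some $x \in \mathcal{X}_k$, and conversely every such $g(x)$ lies in $\mathcal{G}_k$. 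Substituting into (\ref{closedloop}) and using the definition (\ref{def_1}) at $k+1$ produces exactly (\ref{Xh+1}). The union formula (\ref{prop_2}) is then immediate from (\ref{def_2}) in Definition \ref{reachable_set}.

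The one point that needs a little care, and which I expect to be the main obstacle to a completely rigorous argument, is the interpretation of the constraint $g(x(k)) \in \mathcal{G}_k$ in (\ref{Xh+1}): the displayed set-builder notation appears to decouple $x(k)$ and $g(x(k))$, yet the functional dependence through the MLP $g$ must be preserved to avoid over-approximation. The resolution is that, by Proposition \ref{proposition_1}, $\mathcal{G}_k$ equals the exact image $g(\mathcal{X}_k)$, so the constraint $g(x(k)) \in \mathcal{G}_k$ is automatically satisfied for every $x(k) \in \mathcal{X}_k$ and the two formulations coincide. I would make this explicit by rewriting (\ref{Xh+1}) as $\mathcal{X}_{k+1} = A_{\sigma(k)} \mathcal{X}_k \oplus B_{\sigma(k)} g(\mathcal{X}_k)$ in set-image form, which removes any ambiguity and makes the induction step a one-line consequence of Proposition \ref{proposition_1}.

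Finally, since linear images of unions of polytopes are again unions of polytopes, the inductive hypothesis propagates: $\mathcal{X}_{k+1}$ is again a union of polytopes, so Algorithm \ref{algorithm_2} can be invoked at the next iteration, closing the induction. The statement (\ref{prop_2}) for the time-interval reachable set then follows by taking the finite union of the per-step sets, exactly as in (\ref{def_2}).
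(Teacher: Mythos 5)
Your overall route is the same as the paper's: its proof is a two-line appeal to the closed-loop dynamics, to Proposition \ref{proposition_1} for the identity $\mathcal{G}_k = g(\mathcal{X}_k)$, and to Definition \ref{reachable_set} for the union over $[0,k]$. Your induction on $k$ with the base case supplied by Assumption 1 simply makes that explicit, and your observation that the constraint $g(x(k))\in\mathcal{G}_k$ in (\ref{Xh+1}) is automatically satisfied once $\mathcal{G}_k$ is known to be the exact image of $\mathcal{X}_k$ under $g$ is correct and worth recording.

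However, the ``clarifying'' rewrite you propose at the end is wrong and would destroy the exactness you are trying to establish. The Minkowski sum $A_{\sigma(k)}\mathcal{X}_k \oplus B_{\sigma(k)}\, g(\mathcal{X}_k) = \{A_{\sigma(k)}x_1 + B_{\sigma(k)}g(x_2) \mid x_1,x_2\in\mathcal{X}_k\}$ decouples the state fed to the plant from the state fed to the controller --- precisely the decoupling you had just, correctly, flagged as the thing to avoid. The exact one-step set is $\{A_{\sigma(k)}x + B_{\sigma(k)}g(x) \mid x\in\mathcal{X}_k\}$, with the \emph{same} $x$ in both terms, and in general this is a strict subset of the Minkowski sum. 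The error propagates into your closing remark: to see that $\mathcal{X}_{k+1}$ is again a union of polytopes you cannot invoke ``linear images of unions of polytopes'' applied to a Minkowski sum; you instead need that $g$ is piecewise affine on a polyhedral partition (this is exactly what the case analysis behind Theorem \ref{thm1} provides), so that on each cell the map $x\mapsto A_{\sigma(k)}x + B_{\sigma(k)}g(x)$ is affine and carries each polytope piece to a polytope. With the Minkowski-sum reformulation removed and the coupled image used throughout, your induction is sound and coincides with the paper's (much terser) argument.
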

\begin{proof}
	By $x(k+1) = A_{\sigma (k )} x(k) + B_{\sigma (k )} g(x(k ))$ and $	\mathcal{G}_k  = \mathrm{networkoutput}(W^{[\ell]},\theta^{[\ell]},\mathcal{X}_{k} )$ in Proposition \ref{proposition_1}, it can obtained $\mathcal{X}_{k+1}$ by (\ref{Xh+1}). Then, according to Definition \ref{reachable_set}, $\mathcal{X}_{[0,k]}$ should be (\ref{prop_2}). 
\end{proof}

The algorithm for reachable set computation of a closed-loop system is summarized in Algorithm \ref{algorithm_3}.


\begin{proposition}
		Consider piecewise linear system (\ref{system}) with initial set (\ref{inputset}) and a given safety specification $\mathcal{S}$,  closed-loop system (\ref{closedloop}) is safe for interval $[0,k]$ if ${\mathcal{X}}_{[0,k]} \cap \neg \mathcal{S} = \emptyset$, where $\mathcal{X}_{[0,k]}$ is obtained by (\ref{prop_2}) in Proposition \ref{proposition_2}. 
\end{proposition}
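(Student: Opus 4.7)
The plan is to observe that this proposition is essentially a direct corollary of Definition~2 combined with Proposition~\ref{proposition_2}, so the work consists of aligning the computed reachable set with the one used in the safety predicate. Since the safety condition (\ref{safety}) is literally ``the true reachable set over $[0,k]$ is disjoint from the unsafe region,'' all that must be shown is that the set $\mathcal{X}_{[0,k]}$ produced by (\ref{prop_2}) coincides with (or at minimum contains) the reachable set in Definition~\ref{reachable_set}.

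First, I would unfold the definition: by Definition~2, $(\ref{closedloop})$ is safe on $[0,k]$ iff $\mathcal{X}_{[0,k]}^{\mathrm{true}} \cap \neg \mathcal{S} = \emptyset$, where $\mathcal{X}_{[0,k]}^{\mathrm{true}} = \bigcup_{h=0}^{k} \{x(h) \mid x(0)\in \mathcal{X}_0\}$ as in (\ref{def_1})--(\ref{def_2}). Second, I would argue by induction on $h$ that the iteratively computed $\mathcal{X}_h$ from (\ref{Xh+1}) equals this true reachable set at step $h$: the base case $h=0$ is the hypothesis $\mathcal{X}_0$ given by (\ref{inputset}); for the inductive step, Proposition~\ref{proposition_1} guarantees that $\mathcal{G}_h = \{g(x(h)) \mid x(h)\in \mathcal{X}_h\}$ is the exact image of $\mathcal{X}_h$ through the ReLU network, and (\ref{Xh+1}) then applies the exact affine closed-loop map $A_{\sigma(h)}x + B_{\sigma(h)}u$, so the computed $\mathcal{X}_{h+1}$ matches the one from (\ref{closedloop}) at time $h+1$. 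Taking the union over $h=0,\ldots,k$ yields $\mathcal{X}_{[0,k]} = \mathcal{X}_{[0,k]}^{\mathrm{true}}$.

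Third, with this equality in hand, the proposition follows by substitution into (\ref{safety}): if the computed $\mathcal{X}_{[0,k]}$ has empty intersection with $\neg \mathcal{S}$, so does the true reachable set, which is exactly the safety certificate. I would close the proof by noting that because the set operations in Theorem~\ref{thm1} and Corollary~\ref{cor1} are exact, the verification is sound and not merely an over-approximation check.

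The main obstacle is really only a bookkeeping one: making the inductive step rigorous requires confirming that all three of the sub-cases in Theorem~\ref{thm1} (the zero branch, the all-active branch, and the mixed branches indexed by $Q_m$) collectively cover every input in $\mathcal{V}^{[\ell]}$ without omission. Once that exactness is accepted from the earlier results, the present proposition reduces to a one-line application of the safety definition, so no genuinely new technical content is required.
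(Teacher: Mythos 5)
Your proposal is correct and follows essentially the same route as the paper, which simply declares the result an immediate consequence of Definition~\ref{reachable_set} and Proposition~\ref{proposition_2}; you merely fill in the inductive bookkeeping that the paper leaves implicit. No divergence in approach and no gap.
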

\begin{proof}
	It is an immediate result by Definition \ref{reachable_set} and Proposition \ref{proposition_2}. The proof is complete. 
\end{proof}

The output of Algorithm \ref{algorithm_3} is an exact reachable set of the closed-loop system (\ref{closedloop}). However, the number of polytopes increases rapidly as time steps grow since the output of a neural network controller is a union of a number of polytopes, which makes the computational cost become intractable for  long intervals. To avoid the highly computational cost associated with the increasing number of polytopes for $\mathcal{X}_k$, we use convex hull of $\mathcal{X}_k$, that is using $\bar{\mathcal{X}}_k = \mathrm{conv}(\mathcal{X}_k)$ instead of $\mathcal{X}_k$ in Algorithm \ref{algorithm_3} to ensure that the output at each $k$ is only one polytope. However, the cost of this simplification is that the output is an over-approximation of the reachable set.

\begin{corollary}\label{cor3}
	Consider a  piecewise linear system of the form (\ref{system}) with initial set as in (\ref{inputset}) and a given safety specification $\mathcal{S}$, the closed-loop system given in (\ref{closedloop}) is safe for interval $[0,k]$ if $\bar{\mathcal{X}}_{[0,k]} \cap \neg \mathcal{S} = \emptyset$, where $\bar{\mathcal{X}}_{[0,k]}=\bigcup\nolimits_{h=0}^{k}\mathrm{conv}(\mathcal{X}_h)$.
\end{corollary}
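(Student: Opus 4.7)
The plan is to reduce Corollary \ref{cor3} to the previously established safety proposition by showing that $\bar{\mathcal{X}}_{[0,k]}$ is an over-approximation of the exact reachable set $\mathcal{X}_{[0,k]}$ produced by Proposition \ref{proposition_2}. Once the over-approximation property is in place, the safety conclusion follows from monotonicity of set intersection: a larger set avoiding $\neg \mathcal{S}$ forces any subset to avoid $\neg \mathcal{S}$ as well.

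First I would note the elementary inclusion $\mathcal{X}_h \subseteq \mathrm{conv}(\mathcal{X}_h)$ for each $h = 0, 1, \ldots, k$, which is immediate from the definition of the convex hull as the smallest convex set containing $\mathcal{X}_h$. Taking unions over $h$ preserves inclusions, so
\begin{equation*}
\mathcal{X}_{[0,k]} = \bigcup\nolimits_{h=0}^{k} \mathcal{X}_h \;\subseteq\; \bigcup\nolimits_{h=0}^{k} \mathrm{conv}(\mathcal{X}_h) = \bar{\mathcal{X}}_{[0,k]},
\end{equation*}
where the leftmost equality is exactly (\ref{prop_2}) from Proposition \ref{proposition_2}.

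Next I would invoke monotonicity of intersection: from $\mathcal{X}_{[0,k]} \subseteq \bar{\mathcal{X}}_{[0,k]}$ it follows that $\mathcal{X}_{[0,k]} \cap \neg \mathcal{S} \subseteq \bar{\mathcal{X}}_{[0,k]} \cap \neg \mathcal{S}$. Thus the hypothesis $\bar{\mathcal{X}}_{[0,k]} \cap \neg \mathcal{S} = \emptyset$ forces $\mathcal{X}_{[0,k]} \cap \neg \mathcal{S} = \emptyset$, which is precisely the safety condition (\ref{safety}) of the preceding proposition. Applying that proposition concludes that the closed-loop system (\ref{closedloop}) is safe on $[0,k]$.

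There is no real obstacle here; the argument is a one-line over-approximation observation chained with the exact-set safety result. The only thing worth commenting on in the write-up is that replacing $\mathcal{X}_h$ by $\mathrm{conv}(\mathcal{X}_h)$ in Algorithm \ref{algorithm_3} yields an outer bound at every iterate, so the cascaded computation still produces a superset of the true reachable set --- making the verification condition sound but possibly conservative.
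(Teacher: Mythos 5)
Your proof is correct and follows essentially the same route as the paper's: the inclusion $\mathcal{X}_h \subseteq \mathrm{conv}(\mathcal{X}_h)$ gives $\mathcal{X}_{[0,k]} \subseteq \bar{\mathcal{X}}_{[0,k]}$, and emptiness of the larger intersection forces emptiness of the smaller one, reducing to the exact-set safety proposition. The closing remark about the cascaded convex-hull computation remaining an outer bound is a fair observation but not needed for the statement as given.
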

\begin{proof}
	Since $\mathcal{X}_h\subseteq\mathrm{conv}(\mathcal{X}_h)$, we have reachable set $\mathcal{X}_{[0,k]}=\bigcup\nolimits_{h=0}^{k}(\mathcal{X}_h) \subseteq \bigcup\nolimits_{h=0}^{k}\mathrm{conv}(\mathcal{X}_h)=\bar{\mathcal{X}}_{[0,k]}$. Thus,  it is sufficient to say ${\mathcal{X}}_{[0,k]} \cap \neg \mathcal{S} = \emptyset$, if $\bar{\mathcal{X}}_{[0,k]} \cap \neg \mathcal{S} = \emptyset$. 
\end{proof}

\section{Numerical Example}
Let us consider a piecewise linear system with randomly generated system matrices as follows: 
\begin{align*}
A_1 = \left[\begin{array}{*{20}c}
-1.0609 & {  -1.0645}  \\
0.6600 & { -0.6178}  \\	
\end{array}\right],B_1 = \left[\begin{array}{*{20}c}
-0.9759  &  0.3688  \\
0.5874   & 2.5345  \\	
\end{array}\right]
\\
A_2 = \left[\begin{array}{*{20}c}
-0.5487 & {  -0.0196}  \\
0.3390 & { 1.2870}  \\	
\end{array}\right],B_2 = \left[\begin{array}{*{20}c}
0.5573 &   1.0926  \\
-0.6622 &   0.9284  \\	
\end{array}\right].
\end{align*}

The switching is assumed to be a periodic one defined as 
\begin{equation}
\sigma(k+1) = \left\{ {\begin{array}{*{20}c}
	1 & { \sigma(k) = 2}  \\
	2 & {\sigma(k) = 1}  \\	
	\end{array} } \right..
\end{equation}

Then, we let the input $u(k)$ be generated from a 2-layer ReLU neural network $g(x(k))$ with input $x(k)$. Assume there are 4 neurons in the hidden layer and 2 neurons in the output layer, and the weight matrices and bias vectors are randomly selected as  
\begin{align*}
W^{[1]} &= \left[\begin{array}{*{20}c}
 -0.4949 &  -0.4273 \\
 -0.6112  & -0.5277 \\
 -0.4287  & -0.5161\\
 0.0585  & -0.3319 \\	
\end{array}\right],
\theta^{[1]} = \left[\begin{array}{*{20}c}
-0.1971 \\
-0.2435 \\
0.9452\\
0.3945	\\
\end{array}\right]
\\
W^{[2]}& = \left[\begin{array}{*{20}c}
0.5891 &  -0.4770  &  0.0849 &   0.2686 \\
0.3210  & -0.2599  &  0.1594  & -0.0423	
\end{array}\right]
\\
\theta^{[2]}& = \left[\begin{array}{*{20}c}
-0.1862 \\
-0.1339
\end{array}\right].
\end{align*}

The initial set is given by $\mathcal{X}_0 = \{x \mid \left\|x\right\|_\infty\le 1,~x \in \mathbb{R}^{2}\}$. The estimated reachable sets based on Corollary \ref{cor3}, by which convex hull is used, for interval $[0,5]$ and $[0,10]$ are shown in Figs.\ref{fig2} and \ref{fig3}, which are depicted in green. We also discretize the initial set $\mathcal{X}_0$ by step 0.1 and generate 400 trajectories, which are all included in the estimated reachable set. 

With the output reach sets $\mathcal{X}_{[0,5]}$ and $\mathcal{X}_{[0,10]}$ in Figs. \ref{fig2} and \ref{fig3}, the safety property can be easily verified by inspecting the figures for non-empty intersections between the over-approximation of the reachable set and an unsafe region. For example, considering the unsafe region described by  $\neg \mathcal{S}= \{x \mid \left\|x-x_c\right\|_{\infty} \le 1,~x_c = [4,4]^{\top}\}$ which is depicted in red in Figs.\ref{fig2} and \ref{fig3},  it is easy to see that the closed-loop system is safe in interval $[0,5]$ since there is an empty intersection between reachable set $\mathcal{X}_{[0,5]}$ and unsafe set $\neg\mathcal{S}$, However, the safety property of closed loop system is uncertain over the time interval $[0,10]$ since $\mathcal{X}_{[0,10]} \cap \neg\mathcal{S} \ne \emptyset$. 

\begin{figure}
	\begin{center}
		\includegraphics[width=8.5cm]{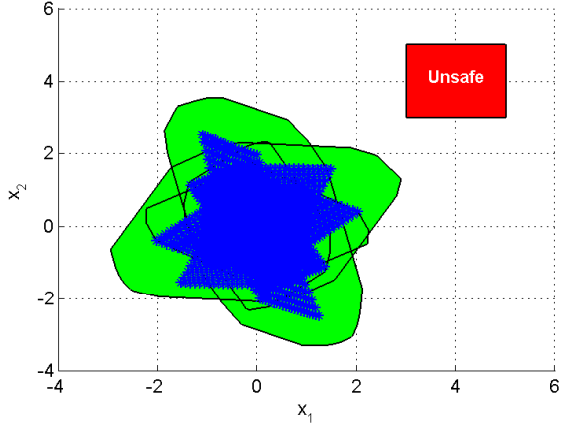}
		\caption{Estimated reachable set $\mathcal{X}_{[0,5]}$ is the green area. Blue markers * are 400 state trajectories from initial set $\mathcal{X}_0$. Red area is the unsafe region $\neg \mathcal{S}$. There is no intersection between $\mathcal{X}_{[0,5]}$ and $\neg \mathcal{S}$, thus the closed-loop system is safe in $[0,5]$. }
		\label{fig2}
	\end{center}
\end{figure}

\begin{figure}
	\begin{center}
		\includegraphics[width=8.5cm]{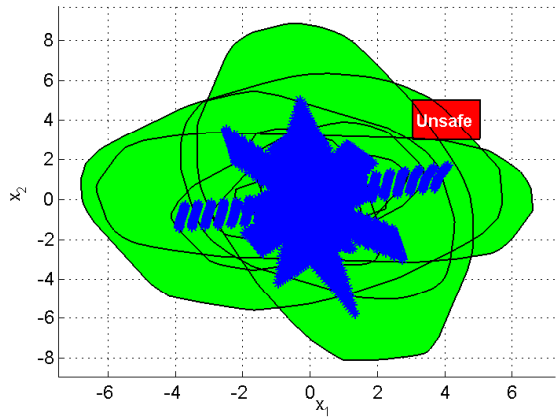}
		\caption{Green area is the estimated reachable set $\mathcal{X}_{[0,10]}$, blue * are 400 state trajectories and red area is the unsafe region $\neg \mathcal{S}$. The intersection between $\mathcal{X}_{[0,10]}$ and $\neg \mathcal{S}$ is not empty, thus the safety property of closed-loop system is uncertain for $[0,10]$, even though no simulated state trajectory enters the unsafe region.}
		\label{fig3}
	\end{center}
\end{figure}

\section{Conclusions}
The reachable set estimation problem for a class of discrete-time piecewise linear systems with neural network feedback controllers has been studied in this paper. First, a layer-by-layer computation method is proposed for computing neural networks consisting of ReLU neurons. The computation process is formulated as a set of polytope operations. An algorithm is proposed for reachable set estimation for piecewise linear systems with ReLU neural networks in a finite-time interval. Furthermore, the safety property of the closed-loop system can be verified by checking for nonempty intersections between the estimated output reachable set and unsafe regions. A numerical example is provided to show the effectiveness of the proposed approach.

\bibliographystyle{ieeetr}
\bibliography{ref}

\end{document}